\newcounter{multifig}
\def\BibTeX{{\rm B\kern-.05em{\sc i\kern-.025em b}\kern-.08em
    T\kern-.1667em\lower.7ex\hbox{E}\kern-.125emX}}
\newcommand\norm[1]{\left\lVert #1 \right\rVert}
\newtheorem{theorem}{Theorem}[section]
\newtheorem{lemma}[theorem]{Lemma}
\newtheorem{definition}[theorem]{Definition}
\def\BibTeX{{\rm B\kern-.05em{\sc i\kern-.025em b}\kern-.08em
    T\kern-.1667em\lower.7ex\hbox{E}\kern-.125emX}}
\begin{document}

\title{Near-optimal Differentially Private Client Selection in Federated Settings}

\author{\IEEEauthorblockN{Syed Eqbal Alam,
Dhirendra Shukla, and Shrisha Rao}\thanks{Syed Eqbal Alam and Dhirendra Shukla are with the Faculty of Engineering, University of New Brunswick, Fredericton, New Brunswick, Canada. Shrisha Rao is with the International Institute of Information Technology,
Bangalore, Karnataka, India.}}

\maketitle
\begin{abstract}
We develop an iterative differentially private algorithm for client selection in federated settings. We consider a federated network wherein clients coordinate with a central server to complete a task; however, the clients decide whether to participate or not at a time step based on their preferences---local computation and probabilistic intent. The algorithm does not require client-to-client information exchange. The developed algorithm provides near-optimal values to the clients over long-term average participation with a certain differential privacy guarantee. Finally, we present the experimental results to check the algorithm's efficacy.
\end{abstract}

%%% The code below was generated by the tool at http://dl.acm.org/ccs.cfm.
%%% Please replace this example with code appropriate for your own paper.

%%% Use this command to specify a few keywords describing your work.
%%% Keywords should be separated by commas.

\IEEEkeywordsname{: Differential privacy, Federated optimization, Client selection, Distributed optimization,  Optimization and control.}

%%%%%%%%%%%%%%%%%%%%%%%%%%%%%%%%%%%%%%%%%%%%%%%%%%%%%%%%%%%%%%%%%%%%%%%%

%%% Include any author-defined commands here.
         
%\newcommand{\BibTeX}{\rm B\kern-.05em{\sc i\kern-.025em b}\kern-.08em\TeX}

%%%%%%%%%%%%%%%%%%%%%%%%%%%%%%%%%%%%%%%%%%%%%%%%%%%%%%%%%%%%%%%%%%%%%%%%

%\begin{document}

%%% The following commands remove the headers in your paper. For final 
%%% papers, these will be inserted during the pagination process.

%\pagestyle{fancy}
\footnote{ To appear in the proceedings of the 59th Annual Allerton Conference on Communication, Control, and Computing, September 2023, Monticello, Illinois, USA.}

%%% The next command prints the information defined in the preamble.

%\maketitle 
%%%%%%%%%%%%%%%%%%%%%%%%%%%%%%%%%%%%%%%%%%%%%%%%%%%%%%%%%%%%%%%%%%%%%%%%

\section{Introduction}
Let us consider a federated network wherein clients such as mobile phones, IoT devices, etcetera coordinate with a central server or edge servers to complete a task and achieve the social welfare of the network. However, they may not wish to exchange information with other clients in the network. These devices are constrained by battery life,  computational power, network bandwidth, etc. To complete their tasks, they may need additional shared resources. Such settings wherein clients coordinate with a central server and do not require inter-device communication are called {\em federated settings}. Federated setting has recently attracted much interest from the machine learning community as in federated learning \cite{Mcmahan2017, KonecnyMR2015}. 
In federated learning, several client selection strategies are developed; in most approaches, a subset of clients are selected randomly by the central server at a time step to train the global model. The clients download the global model, train it on their device data, and send the learned parameters to the central server without communicating it to other clients in the network. The central server then aggregates the clients' parameters and updates the global model based on the weighted average of the clients' parameters \cite{Kairouz2021}. In these strategies, the central server randomly chooses the clients and does not consider the client's preferences and choices of whether the clients want to participate or not.

We develop a local differential privacy algorithm for client selection in a federated setting in which clients decide whether to participate in completing a task based on their local computation and their probabilistic intent. Following the solution, the clients reach near-optimal solutions over long-term average participation with a differential privacy guarantee. %{\color{blue} We would like to clarify that our model does not involve learning}.

Differential privacy (DP) was proposed by Dwork et al. \cite{Dwork2006}, \cite{Dwork2014}. It provides a certain amount of privacy guarantee to clients storing their data in a (centralized) database. Moreover, differential privacy provides a certain amount of resistance to a client's re-identification while interacting with the database; this factor is called {\em privacy budget}. The privacy budget is the maximum amount of information that can be learned about a client from the privacy mechanism's output. 

There are several real-world applications of differential privacy; for example, in smart metering \cite{Sankar2013}, medical imaging \cite{Kaissis2020}, collecting usage statistics on web browser \cite{Erlingsson2014} (deployed by Google), collecting telemetry data of user devices \cite{Ding2017} (deployed by Microsoft), to learn new words from the users' devices \cite{Thakurta2017} (deployed by Apple).

In this paper, we propose a differentially private algorithm for client selection problems in federated settings with a certain privacy guarantee to clients. We extend the work of  \cite{Griggs2016} and introduce differential privacy guarantees to clients in the network. The algorithm provides near-optimal values to the clients over long-term average participation with a  certain privacy guarantee. 
We briefly describe the algorithm of \cite{Griggs2016} here and call it the {\em classical} algorithm. It is a distributed, iterative, and stochastic algorithm. 
In the algorithm, several clients are considered in a network, wherein each client has a cost function that depends on its time-averaged participation. Moreover, a central server is considered that keeps track of the total number of participating clients at a time step. Based on the total number of participating clients and the desired number of clients (called the {\em capacity constraint}), the central server calculates a {\em price signal} and broadcasts it in the network at each time step. After receiving the price signal, a client responds probabilistically to whether it will participate or not at the next time step. This process repeats over time. Following this process, the overall cost to the network is minimized over long-term average participation.

In the classical algorithm \cite{Griggs2016}, the clients send their true participation states to the central server, whether they would participate in completing the task at a time step or not. Let us consider scenarios where the central server or clients in the network work as an adversary. With access to the public price signals, the adversary may infer the client's participation states, average participation values, cost functions, or the derivatives of the cost functions. Therefore, we need a privacy mechanism to protect clients' privacy in the network. One of the techniques to obtain differential privacy is the randomized responses proposed by Warner \cite{Warner1965}. Randomized response is a survey technique to collect sensitive personal information of a respondent. In this technique, the respondents randomize their responses before sending them to the surveyor. The randomized response is widely used to develop differentially private algorithms, for example, as in \cite{Erlingsson2014, Ding2017, Duchi2013, Kairouz2016, Wang2016}.

Our proposed algorithm uses randomized responses wherein the clients randomize their participation states to protect their states, derivatives of the cost functions, and their cost functions. Following the algorithm, the solution to the formulated optimization problem (see \eqref{opt-prob}) provides near-optimal values with a certain privacy guarantee to clients in the network in a differential privacy sense. 
The algorithm is an extension of the classical algorithm by Griggs et al. \cite{Griggs2016} with a differential privacy guarantee using randomized response. Briefly, we consider several clients in a network; each client has a cost function, which depends on a client's time-averaged participation. Additionally, we consider a central server that keeps track of the total number of participating clients. Based on the total number of participating clients and the desired number of participants (the {\em capacity constraint}), the central server calculates a {\em price signal} and broadcasts it in the network at each time step. After receiving the price signal, a client responds probabilistically to whether it will participate or not in the next time step. If the probabilistic response is not to participate, the client further randomizes its response with a coin flip and decides whether to participate or not. This process repeats over time. By doing so, the social cost of the network over long-term average participation is close to the optimal social cost, and the central server or any adversary client will not know with certainty whether the participation states were actual states or not. Because of the additional anonymity, a certain amount of privacy is guaranteed to the clients in the network. 

{\bf Contributions:} The main contribution to this paper is the proposed differentially private algorithm for a federated setting wherein clients collaborate with a central server to complete a task without inter-agent communication. The clients decide to participate in performing the task at a time step based on their preferences and on-device computation. The algorithm is a novel modification of the classical algorithm of \cite{Griggs2016}. The algorithm solves the optimization Problem \ref{opt-prob} with close to the optimal value and provides a certain amount of privacy guarantee to the participating clients. The algorithm protects the privacy of the  client's participation states. Additionally, to check the algorithm's efficacy, we present the simulation results and compare the results with the classical algorithm \cite{Griggs2016}. 

\section{Background and Prior work}
Generally speaking, in distributed systems, the privacy of a client's states,  cost functions, sub-gradients of its cost functions, or constraints should be preserved  \cite{Ding2018, Zhang2019, Dobbe2020}. 

Han et al. \cite{Han2017} proposed a differentially private distributed algorithm for allocating divisible resources. They consider that the cost functions are convex and are continuously differentiable. To do so, they add noise to the constraints of the optimization problem. Huang et al. in  \cite{Huang2015} developed differentially private distributed algorithms using convex functions. They add noise to the cost functions of clients. A distributed differentially private algorithm was proposed by Olivier et al. \cite{Olivier2020} to optimally allocate resources. Interested readers can refer to  \cite{LeNy2014, Katewa2018, Farokhi2020, LeNy2020-book}, and \cite{Ding2018} for differentially private algorithms for dynamical systems.
Fioretto and co-authors in \cite{Fioretto2020} developed a differentially private mechanism based on Stackelberg games. Duchi et al. developed a local differential privacy mechanism in \cite{Duchi2013}. Furthermore, a local differentially private solution was developed in \cite{Dobbe2020} for solving distributed convex optimization problems. Chen and co-authors proposed two differentially private models, a local and a shuffle model, in \cite{Chen2021}. We proposed a local differential privacy algorithm for divisible resource allocation that does not require inter-agent communication in \cite{AlamIEEEAccess2023}. The current work can easily be extended to cases where clients' cost functions are multi-variate. Recently, we proposed a multi-indivisible resource allocation solution for a federated multi-agent system in \cite{Alam_CDC2023}; the proof of convergence is motivated by the multi-time scale stochastic approximation techniques, public signals there depend on the decreasing step sizes. Interested readers can also refer to \cite{Alam2022_phd} and \cite{Syed2019_unit} for the convergence proof with constant step sizes in the public signals. 
Moreover, the client selection strategies in a federated setting are studied in \cite{ZhangL2022}. Optimal control of a population of prosumers in a smart energy community is proposed in \cite{Alam_energyISC22022}. Finally, a recent survey on client selection strategies in federated learning can be found at \cite{Meth2022}.

\section{Preliminaries and problem formulation}
Let us now consider $N$ clients collaborating with a central server to complete a task. We use index $i$ for clients. Let $k \in \mathbb{N}$ denote the time step. For $k \in \mathbb{N}$ and $i=1,2,\ldots,N$, let $X_{i}(k) \in \{0,1\}$ be a random variable, it denotes whether client $i$ is participating at time step $k$ or not. Moreover, let $x_{i}(k) \in [0, 1]$ denote the average number of times a client participated up to time step $k$. For $i = 1, 2, \ldots, N$, we define $x_{i}(k)$ as follows,
\begin{align}  %\label{eq:average_alloc}
{x}_{i}(k) \triangleq \frac{1}{k+1} \sum_{\ell=0}^k X_{i}(\ell).
\end{align}

Let each client have a cost function that depends on the average number of the client's participation. Furthermore, let the capacity constraint (total number of desired participation) be $\mathcal{C}$. We formulate the optimization problem in the following subsection.
\subsection{Optimization problem formulation}
Let $f_i: [0,1] \to \mathbb R_+$ be the cost function of client $i$, which associates a cost to the client. We assume that $f_i$ is twice continuously differentiable, strictly convex, and increasing for all $i$. We also assume that the clients do not share their cost functions or participation history with other clients; however, they share their participation states with the central server (whether they participate at a time step). The central server keeps track of the total number of participating clients at each time step. We formulate the following distributed/federated optimization problem.
	\begin{align} \label{opt-prob}
	\begin{split} 
	\min_{x_{1}, \ldots, x_{N}} \quad &\sum_{i=1}^{N} f_i(x_{i}),
	\\ \mbox{subject to } \quad &\sum_{i=1}^{N} x_{i}  =  \mathcal{C}, 
	\\  &x_{i} \in [0, 1], \quad i=1,\ldots,N.
	\end{split}
	\end{align}

Griggs et al. \cite{Griggs2016} proposed a distributed algorithm to solve this optimization problem with no inter-client communication; we call it  the {\em classical algorithm}. As the constraint sets of the optimization problem are compact and each client has a strictly convex cost function, a unique optimal solution exists.
Let $\mathbf{x}^* = ({x}_{1}^{*}, \ldots, {x}_{n}^*) \in (0,1]^{N}$ denote the unique solution to Problem \ref{opt-prob}. Thus,
\begin{align*}
\lim_{k\to \infty} {x}_{i}(k) = {x}_{i}^{*}, \text{ for } i=1,2,\ldots,N.
\end{align*}

We propose a local differentially private, iterative algorithm that determines whether a client is participating $X_{i}(k) \in \{0,1\}$ at a time step to complete a task or not in a federated setting. It achieves close to social minimum cost over long-term average participation with a certain privacy guarantee to clients in the network. The algorithm does not require client-to-client communication. Thus, for the solution to Problem \ref{opt-prob}, our goal is to achieve
\begin{align*}
\lim_{k\to \infty} {x}_{i}(k) \approx {x}_{i}^{*}, \text{ for } i=1,2,\ldots,N,
\end{align*}
with a certain privacy guarantee to clients in the federated network.
The proposed local differentially private algorithm is a novel modification of the distributed classical algorithm by Griggs et al. \cite{Griggs2016} based on randomized response.
For the exposition, we briefly describe the classical algorithm in Subsection \ref{single-res1}. 

\subsection{The classical algorithm} \label{single-res1} 

In the classical algorithm \cite{Griggs2016}, the idea was to choose the probability for the random variable $X_i(k)$ to ensure convergence to the socially optimum value and to adjust overall resource utilization to its capacity $\mathcal{C}$ by applying a public signal $\Theta(k)$ to the probability. When a client joins the
network at time step $k \in \mathbb{N}$, it receives the public signal
$\Theta(k)$ from the central server. At each time step $k$, the central server updates $\Theta(k)$ using a gain parameter $\tau$, past utilization of the resource, and the resource 
capacity, as in \eqref{Theta_bs1}; after updating it, the central server broadcasts the new value to all clients in the network, 
\begin{align} \label{Theta_bs1}
\begin{split}
\Theta(k+1) \triangleq \Theta(k) -  \tau  \Big (\sum_{i=1}^N X_i(k) - \mathcal{C} \Big),
\end{split}
\end{align}
\begin{align*} %\label{tau_bs1}
\text{where } \tau \in \Big( 0,  \Big( \max_{\mathbf{x} \in
	[0,1]^N} \sum_{i=1}^{N} \frac{ x_i}{f_i'({x}_i)}  \Big)^{-1} \Big).
\end{align*}
After receiving this signal, a client responds in a random way
based on its average allocations, the gradient of the cost function, and the public signal. The probability density function
$\sigma_i(\cdot)$ uses the average allocation of the resource to client $i$ and the derivative $f_i'$ of the cost function $f_i$, is given by,
\begin{align} \label{sigma_bs}
\sigma_i(\Theta(k),x_i(k)) \triangleq \Theta(k)
\frac{{x}_i(k)}{
	{f_i'({x}_i(k))}}, \text{ for } i=1,2,\ldots,N.	
\end{align} 
The public signal $\Theta(k)$ is chosen so that $0 < \sigma_i(\cdot) \leq 1$.
Using the probabilistic response, client $i$ updates whether it wants to participate or not in the next time step.
The process repeats over time to obtain the optimal value over long-term average clients' participation, and the network achieves minimum social cost. The algorithm of the central server is presented in Algorithm \ref{algo_CU}, and the algorithm of client $i$ is presented in Algorithm \ref{algo-QoSPA-client}.

\begin{algorithm} \SetAlgoLined Input:
	$C$, $\tau$, $ X_{i}(k)$, for $k \in \mathbb{N}$ and $i=1,2,\ldots,N$.
	
	Output:
	$\Theta(k+1)$, for $k \in \mathbb{N}$.
	
	Initialization: $\Theta(0) \in \mathbb{R}_+$,
	
	\ForEach{$k \in \mathbb{N}$}{
		
		calculate 
		$\Theta(k+1)$ as in \eqref{Theta_bs1} 
		and broadcast it in the network;	
	} 
	\caption{The classical algorithm of the central server.}
	\label{algo_CU}
\end{algorithm}

\begin{algorithm}  \SetAlgoLined Input:
	$\Theta(k)$, for $k \in \mathbb{N}$.
	
	Output: $X_{i}(k+1)$, for $k \in \mathbb{N}$.
	
	Initialization: $X_{i}(0) \leftarrow 1$ and
	${x}_{i}(0) \leftarrow X_{i}(0)$.
	
	\ForEach{$k \in \mathbb{N}$}{
		
		generate Bernoulli independent random variable $b_{i}(k)$ with the parameter  $\sigma_{i}(\Theta(k),x_i(k))$ (see  \eqref{sigma_bs});
		\begin{align} \label{eq:alloc-update}
		b_{i}(k) =
		\begin{cases} 
		1 \quad \text{with probability }  \sigma_i(\Theta(k),x_i(k));\\ 
		0 \quad \text{with probability }  1-\sigma_i(\Theta(k),x_i(k)).
		\end{cases}
		\end{align}
		
		\eIf{ $b_{i}(k) = 1$}{
			$X_{i}(k+1) \leftarrow 1$; }
		{$X_{i}(k+1) \leftarrow 0$; }

		% ${x}_{i}(k+1) \leftarrow \frac{k+1}{k+2}
		%{x}_{i}(k) + \frac{1}{k+2} X_{i}(k+1);$
	}
	\caption{The classical algorithm of client $i$.}
	\label{algo-QoSPA-client}
\end{algorithm}

Notice that the public signal $\Theta(k)$ is a public signal broadcast by the central server at each time step. Furthermore, in the classical algorithm, the clients share their true value of participation at a time step to the central server. We consider that the central server is not trustful; it may work as an adversary, and clients' sensitive information may be leaked. We can also consider another scenario where fewer clients are in the network. A client's privacy may also be compromised if other clients in the network work as adversaries. The adversary client may gain access to the true participation values of a client. As the adversary is in the network and has access to the public signal, with this information, the adversary may infer the derivatives of the cost functions or the cost functions of a client. Therefore, we need a privacy mechanism to protect clients' privacy in the network.
Using randomized response, we propose a distributed, iterative, local differentially private algorithm that solves the optimization Problem \ref{opt-prob} and provides certain privacy guarantees to clients in the network. The proposed algorithm extends the classical Algorithm \ref{algo-QoSPA-client}. We present basic definitions and results of the differential privacy mechanisms in Section \ref{sec:DP}.  %

\subsection{Differential privacy} \label{sec:DP}
Let $S$ be a probabilistic sample space. Let $\mathcal{D}$ be a set of all possible datasets. We define a mechanism $M$ as the map $M:S \times \mathcal{D} \to \mathbb{R}$; moreover, for $D \in \mathcal{D}$, $M(D)$ represents a random variable. Let $q$ denote a query; we define it as a map $q:\mathcal{D} \to \mathbb{R}$. 
\begin{definition}[Distance between datasets \cite{Dwork2006}]
	Let $D_1, D_2 \in \mathcal{D}$ be datasets. We define the distance between the datasets  $D_1$ and $D_2$ as the smallest sample change required to change one dataset into another.
\end{definition}
\begin{definition} [Neighboring or adjacency datasets \cite{Dwork2006}]
	Let $D_1 \in \mathcal{D}$ and $D_2 \in \mathcal{D}$ be datasets. If the distance between datasets $D_1$ and $D_2$ is one, they are called neighboring or adjacency datasets, denoted by $D_1 \sim D_2$.   
\end{definition}
%
%The datasets $D_1 \in \mathcal{D}$ and $D_2 \in \mathcal{D}$ are adjacency datasets in the sense that the data of only one client changes, and the data of other clients remain the same.   
We state the following definition for $p$-norm sensitivity between datasets.
\begin{definition} ($p$-norm sensitivity \cite{Dwork2006}) Let $D_1 \in \mathcal{D}$ and $D_2 \in \mathcal{D}$ be datasets. Let query $q$ be the map $q : \mathcal{D} \to \mathbb{R}$, and let $p$-norm sensitivity be $\Delta q$; we define it as
	\begin{equation} \label{eq:sensitivity}
	\Delta q \triangleq \max_{D_1, D_2 \in \mathcal{D}} \norm{q(D_1) - q(D_2)}_p,
	\end{equation}
	for all neighboring datasets $D_1 \sim D_2$.
\end{definition}

We define $\epsilon$-differential privacy as follows.
\begin{definition}($\epsilon$-differential privacy \cite{Dwork2006})\label{def:DP} Let $S$ be the sample space and $\mathcal{D}$ be the set of datasets. Furthermore, let $D_1 \in \mathcal{D}$ and $D_2 \in \mathcal{D}$ be datasets, and let $M:S \times \mathcal{D} \to \mathbb{R}$ be a privacy mechanism and let $q : \mathcal{D} \to \mathbb{R}$ be query on $\mathcal{D}$. Then for $\epsilon \in \mathbb{R}$, and for all neighboring datasets $D_1 \sim D_2$ and for all $S \subseteq \mathbb{R}$, if the following holds
	\begin{align*}
	\mathbb{P} \left(M(D_1) \in S \right) \leq \exp{(\epsilon)} \cdot \mathbb{P} \left(M(D_2) \in S \right),
	\end{align*}
	then $M$ is called an $\epsilon$-differential privacy mechanism.
\end{definition}
Note that $\epsilon$ is called the {\em privacy budget}.
The smaller value of $\epsilon$ implies that $\mathbb{P}\left(M(D_1) \in S \right)$ and $\mathbb{P} \left(M(D_2) \in S \right)$ are close to each other, and higher privacy is protected.
Also, note that there is a trade-off between an algorithm's privacy and accuracy---the smaller $\epsilon$ provides higher privacy but lesser accuracy. In contrast, the larger $\epsilon$ provides lesser privacy but higher accuracy. 
%
%

%Note that differentially private algorithms are immune to post-processing, in the sense that if an algorithm of a client preserves differential privacy, then an adversary cannot make it less differentially private without accessing the client's private dataset.
%\begin{theorem} (Immunity to post-processing \cite{Dwork2014})
	%Let $M: S \times \mathcal{D} \to \mathbb{R}$ be an $\epsilon$-differentially private algorithm. Furthermore, for $s \in \mathbb{N}$, let the map $\nu : \mathbb{R} \to \mathbb{R}^s$ be randomized. Then the composition $\nu \circ M : S \times \mathcal{D} \to \mathbb{R}^s$ is also $\epsilon$-differentially private.
%\end{theorem}

Notice that Definition \ref{def:DP} is of the centralized systems where a  {\em central server} perturbs the public signal and sends it in the network; in this case, we assume that the central server is trustful. However, for the cases where the central server is not trustful, the {\em local differential privacy mechanism} is proposed \cite{Duchi2013}, \cite{Kasiviswanathan2008}, wherein each client runs its algorithm and perturbs its outputs before sending them to the central server. Thus, the algorithm provides a certain privacy guarantee to each client in the network. 

Let us now consider $N$ clients in a network and let the private parameters of client $i$ be stored in the dataset $D_{i}$, for $i=1,2,\ldots,N$. 
We present the definition of the local $\epsilon_i$-differential privacy mechanism $M_i$ of client $i$ as follows.
\begin{definition}[Local $\epsilon_i$-differential privacy \cite{Dobbe2020}] \label{def:Local-DP} 
	Let $S$ be the sample space and $\mathcal{D}_i$ be the set of datasets of client $i$. Furthermore, let $X_i, X_i' \in \mathcal{D}_i$ be input values, let $M_i:S \times \mathcal{D}_i \to \mathbb{R}$ be a privacy mechanism, and $q_i: \mathcal{D}_i \to \mathbb{R}$ be the query on $\mathcal{D}_i$. Then for $\epsilon_i \in \mathbb{R}$, and for all input values $X_i, X_i' \in \mathcal{D}_i$ and for all output values $\eta \in \mathbb{R}$, if the following holds
	\begin{align}
	\mathbb{P} \left(M_i(X_i) =\eta \right) \leq \exp{(\epsilon_i)} \cdot \mathbb{P} \left(M_i(X_i') =\eta \right),
	\end{align}
	then $M_i$ is called an $\epsilon_i$-local differential privacy mechanism.
\end{definition}
Here, $\epsilon_i$ is the privacy budget of client $i$. We also refer to $\epsilon_i$ as the privacy error.

\section{Local differentially private algorithm} \label{DP-local}
In this section, we propose a local differentially private algorithm for optimal client selection in a federated setting and solve optimization Problem \ref{opt-prob} with a certain privacy guarantee to each client in the network. Moreover, using the results on the sequential combination of local differentially private algorithms---wherein each client randomizes its participation intention before sending it to the central server, we show that the network also provides a certain differential privacy guarantee.

For $N$ clients in the network, each client runs its privacy mechanism to decide whether they want to participate or not in completing a task. Clients do not communicate with each other; however, they share their participation intention with the central server. The central server sends public signals to clients in the network and keeps track of the total participating clients at a time step. We consider that the central server is not trustful. Thus, to preserve privacy, a client randomizes its participation intention before sharing it with the central server. 

%Figure \ref{Diag_DP} presents the local differential privacy model.
%
%\begin{figure}
%	\centering
%	\includegraphics[width=0.5\linewidth]{DP-Images/DP-diag.pdf}
%	\caption{Local differential privacy model.}
%	\label{Diag_DP}
%\end{figure}
%

For client $i=1,2,\ldots,N$, recall that $f_i$ is the cost function and $f_i'$ represents the derivative of the cost function $f_i$. Also, at time step $k \in \mathbb{N}$, whether client $i$ wants to participate or not at time step $k$ is denoted by $X_i(k) \in \{0,1\}$ and time-averaged participation until time step $k$ is denoted by $x_i(k) \in [0,1]$. When client $i$ wants to participate then it updates $X_i(k) =1$ otherwise $X_i(k) =0$. Let $D_i$ be the dataset of the private information of client $i$; specifically, we define $D_i \triangleq \{f_i, f_i', x_i, X_{i} \}$. For $i=1,2,\ldots,N$, let $\mathcal{D}_i$ be the set of all possible datasets $D_i$. We define client $i$'s privacy mechanism as the map $M_i: S \times \mathcal{D}_i \to \{0,1\}$.
Let the map $q_i:D_i \to \{0,1\}$ be the query on the dataset $D_i$. The answer to the query is the client's participation intention $X_{i}(k)$ at time step $k$. The privacy mechanism of the client $i$ answering the query $q_i(D_i)$ is denoted by $M_i(D_i)$. %\marginpar{need to check how we formulate a query in random response.}
Let $\bm{\beta} = (\beta_1, \beta_2, \ldots, \beta_N) \in \mathbb{R}_+^N$ be the privacy parameters known to the central server, and so it knows the maximum total additional clients' participation states at a time step. 
When a client joins the network at time step $k$, it receives a few parameters such as $\Theta(k)$ and $\beta_i \in \mathbb{R}_+$ from the central server. Note that the clients do not know the desired number of participating clients ({\em capacity constraint}) to complete the task, the total number of participating clients at a time step, or the total number of clients in the network.
However, the central server knows the capacity constraint and keeps track of the total number of participating clients at a time step. It updates and broadcasts the public signal $\Theta(k)$ (as in \eqref{Theta_bs1}) in the network. After receiving this signal, a client calculates the probability $\sigma_{i}(\Theta(k),x_i(k))$ (as in \eqref{sigma_bs}) to find Bernoulli's outcome with parameter $\sigma_{i}(\Theta(k),x_i(k))$ as follows 
\begin{align*}
b_{i}(k) = \begin{cases}
1 & \text{ with probability } \ \sigma_{i}(\Theta(k),x_i(k)),\\
0 & \text{ with probability } \ 1 -\sigma_{i}(\Theta(k),x_i(k)).
\end{cases}
\end{align*}
If the outcome $b_{i}(k)= 1$ then client $i$ shows its intention to participate and updates the participation state $X_i(k+1)=1$; otherwise, the client calculates the probability $p_i$ as in \eqref{eq:pi} based on the privacy parameter $\beta_i$ and finds Bernoulli's outcome with parameter $b_{i}'(k)$ as in \eqref{eq:pr-beta}.   
\begin{align} \label{eq:pi}
p_i \triangleq \frac{\beta_i}{2} \exp{(- \beta_i)}.
\end{align}
And,
\begin{align}   \label{eq:pr-beta}
b_{i}'(k) = \begin{cases}
1 & \text{ with probability } p_i, \\
0 & \text{ with probability } 1 - p_i.
\end{cases}
\end{align}
After obtaining the outcome $b_{i}'(k)$, client $i$ updates its participation state as follows:
\begin{align}   \label{eq:xi}
X_{i}(k+1) = \begin{cases}
1 & \text{ if } b_{i}'(k)=1, \\
0 & \text{ if } b_{i}'(k)=0.
\end{cases}
\end{align}

Every client in the network runs its algorithm to achieve close to social-optimum cost on long-term averages that is $\lim_{k\to \infty} \sum_{i=1}^N f_i(x_i(k)) \approx \sum_{i=1}^N f_i(x_i^*)$ with privacy guarantee $\epsilon_i$, for client $i$. We present client $i$'s local differential privacy result in Lemma \ref{lem:local_DP}, and the client $i$'s algorithm is presented in Algorithm \ref{algo_LDP-client}.

\begin{algorithm}  \SetAlgoLined Input:
	$\beta_i$, $\Theta(k)$, for $k \in \mathbb{N}$.
	
	Output: $X_{i}(k+1)$, for $k \in \mathbb{N}$.
	
	Initialization: $X_{i}(0) \leftarrow 1$ and
	${x}_{i}(0) \leftarrow X_{i}(0)$.
	
	\ForEach{$k \in \mathbb{N} $}{
		
		$\sigma_{i}(\Theta(k),x_i(k)) \leftarrow \Theta(k)
		\frac{{x}_{i}(k)}{f_i'(x_i(k))}$; 
		
		generate Bernoulli independent random variable
		$b_{i}(k)$ with the parameter $\sigma_{i}(\Theta(k),x_i(k))$ as in \eqref{sigma_bs};
		
		\uIf{ $b_{i}(k) = 1$}{
			$X_{i}(k+1) \leftarrow 1$;
			
		}
		\Else{generate Bernoulli independent random variable
			$b_{i}'(k)$ with the parameter $p_i$ as in \eqref{eq:pi}; 
			
			\uIf{ $b_{i}'(k) = 1$}{
				$X_{i}(k+1) \leftarrow 1$;
				
			}
			\Else{
				$X_{i}(k+1) \leftarrow 0$;}}

		%${x}_{i}(k+1) \leftarrow \frac{k+1}{k+2}
	%	{x}_{i}(k) + \frac{1}{k+2} X_{i}(k+1);$
	}
	\caption{Differentially private Algorithm of client $i$.}
	\label{algo_LDP-client}
\end{algorithm}

\begin{figure*}
	\centering
	%\subfloat[]{%
		%\includegraphics[width=0.234\textwidth]{DP-Images/single-randomness/avg-alloc-single}}
	%\hfill
%	\subfloat[]{%
		\includegraphics[width=0.33\textwidth]{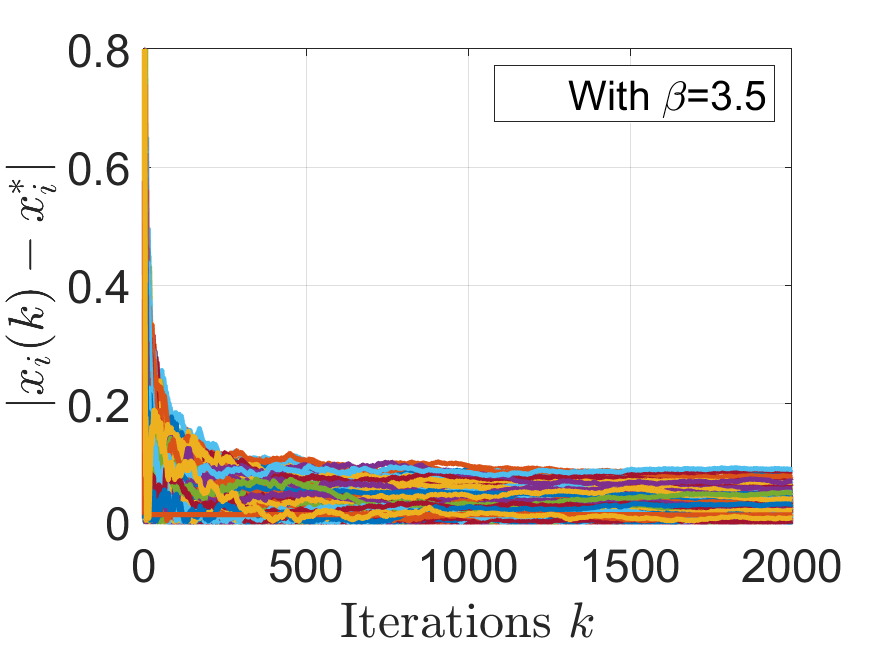}
  %}
  \subfloat[]{%
		\includegraphics[width=0.33\textwidth]{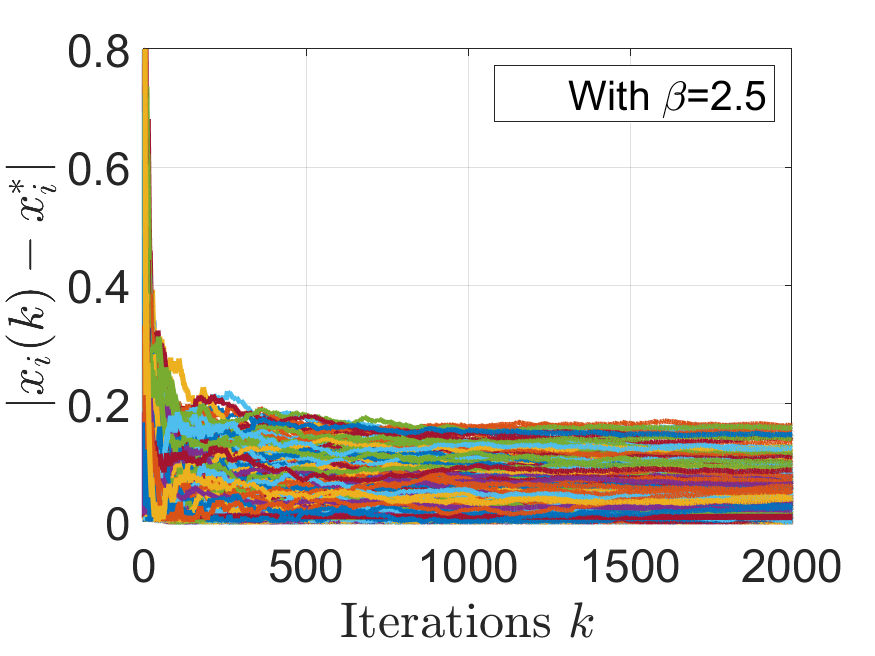}}
	\hfill
	\subfloat[]{%
		\includegraphics[width=0.33\textwidth]{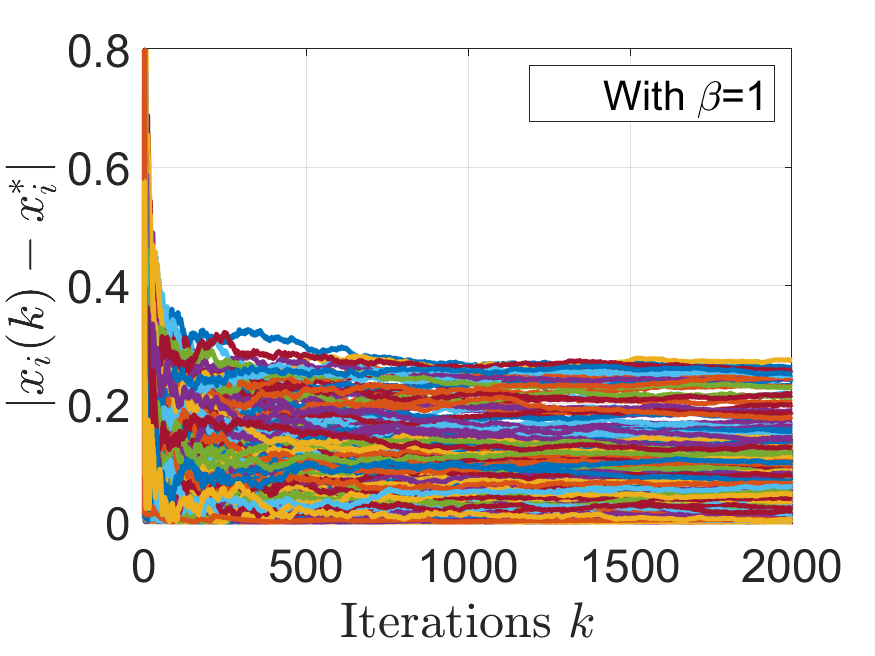}}
	\hfill
	\caption{The evolution of $|x_i(k) - x_i^*|$ for all $1200$ clients, where $x_i(k)$ is the average number of participation of clients and $x_i^*$ is the optimal value by the classical approach \cite{Griggs2016}: (a) with $\beta=3.5$, (b)  with $\beta=2.5$, and (c) with $\beta=1$.}
	\label{fig1-single} 
\end{figure*}
\begin{figure}
	\centering
	\subfloat[]{%
		\includegraphics[width=0.46\textwidth]{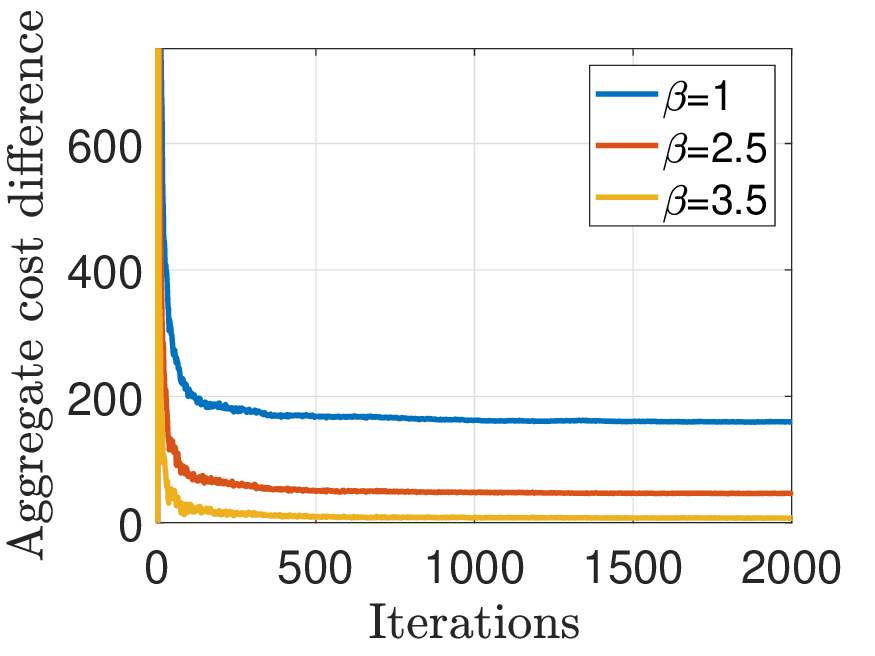}}
  \hfill
  \caption{The evolution of absolute aggregate cost difference $\mid \sum_{i=1}^N f_i(x_i(k)) - \sum_{i=1}^N f_i(x_i^*) \mid$ with privacy algorithm and the classical algorithm.}
	\label{fig3-single} 
\end{figure}
%\subsection{Results using single randomness}
%
\begin{figure}
	\centering
		\includegraphics[width=0.46\textwidth]{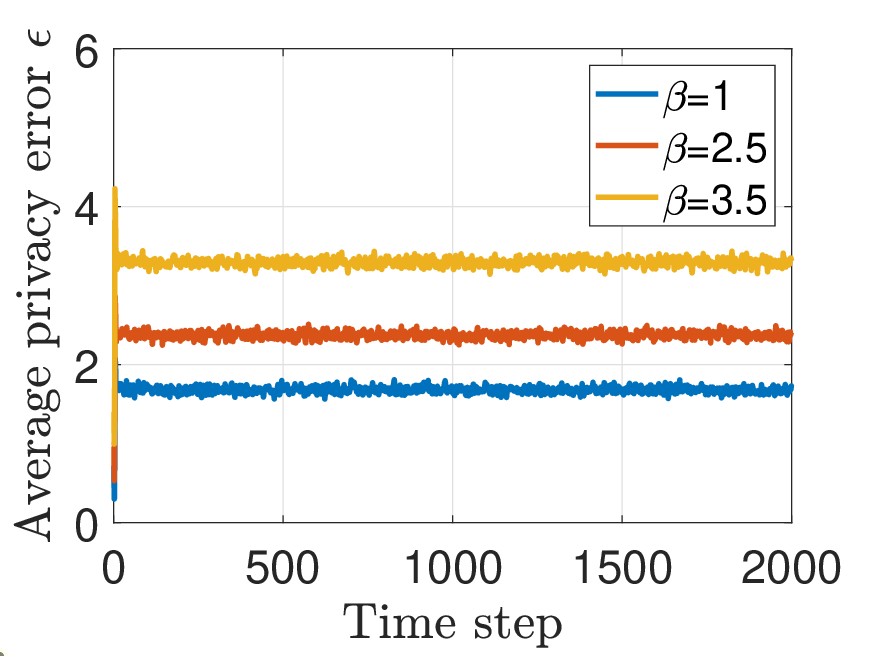}
	\caption{ The evolution of average privacy error $\epsilon(k) = \frac{1}{N} \sum_{i=1}^{N} \epsilon_i(k)$.}
	\label{fig2-single} 
\end{figure}
\begin{figure*} 
	\centering
  \subfloat[]{%
		\includegraphics[width=0.33\textwidth]{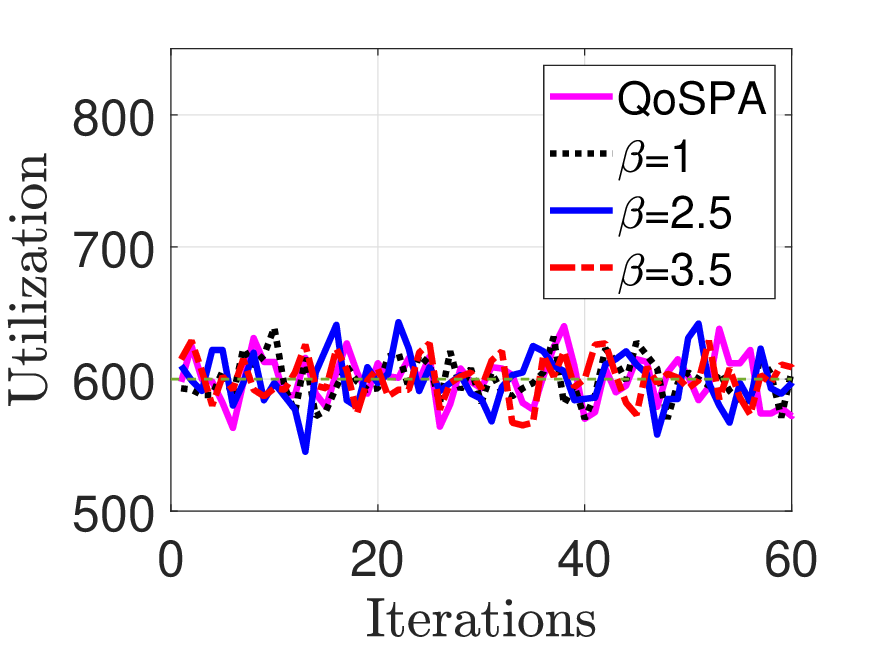}}
 \hfill
	\subfloat[]{%
		\includegraphics[width=0.33\textwidth]{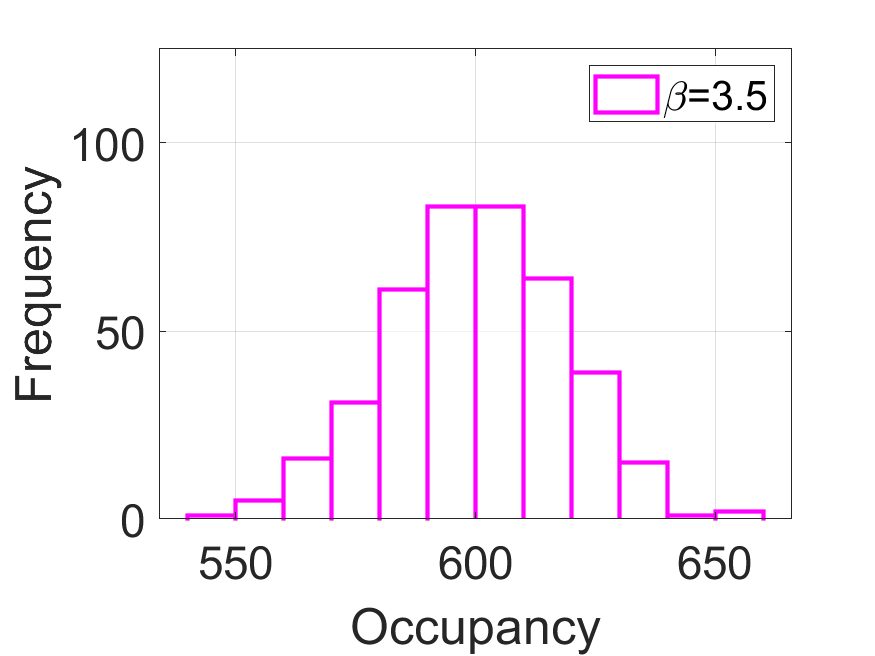}}
	\hfill
	\subfloat[]{%
		\includegraphics[width=0.33\textwidth]{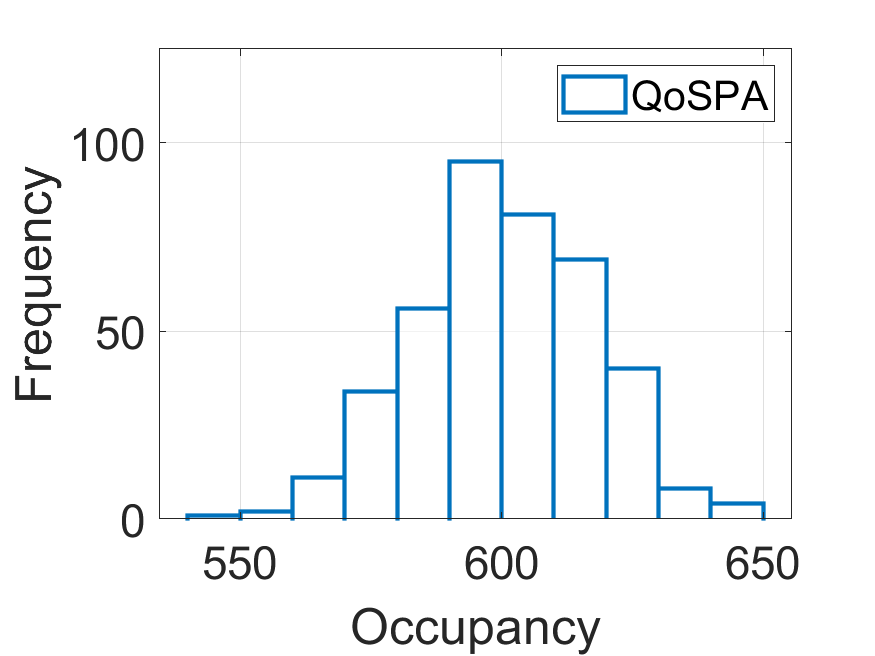}}
	\caption{ (a) The aggregate of clients' participation $\sum_{i=1}^N X_i(k)$ for different values of $\beta$ and the classical approach (QoSPA) for the chosen $60$ time steps, (b) clients' participation for $\beta=3.5$, and (c) clients' participation for the classical approach (QoSPA). Histograms $(b)$ and $(c)$ are plotted for the last $400$ time steps.}
	\label{fig5-single} 
\end{figure*}
\begin{figure*}
	\centering
	\subfloat[]{%
		\includegraphics[width=0.48\textwidth]{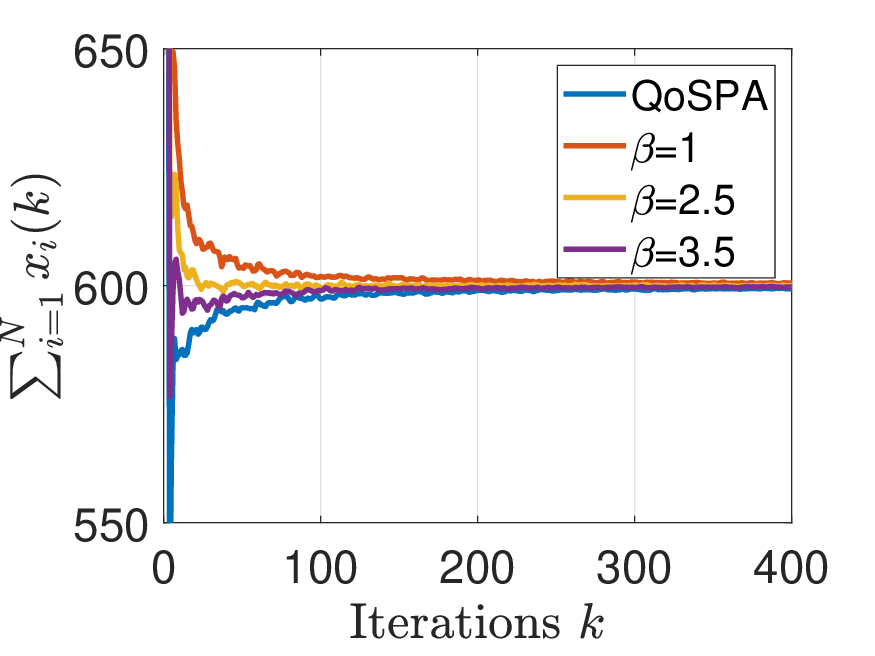}}
	\hfill
	\subfloat[]{%
		\includegraphics[width=0.48\textwidth]{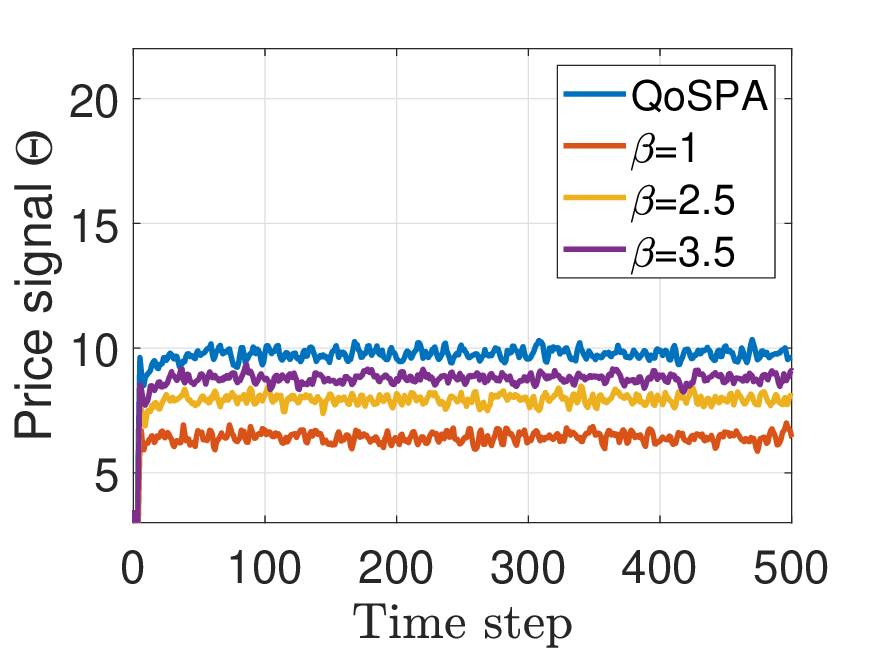}}
	\hfill
	\caption{(a) The evolution of aggregate average participation $\sum_{i=1}^N x_i(k)$, and (b) the evolution of the public signal $\Theta(k)$ with $\tau=0.0075$, for different values of privacy parameter $\beta$. QoSPA denotes the classical approach \cite{Griggs2016}.}
	\label{fig6-single} 
\end{figure*}

\begin{lemma} \label{lem:local_DP}
Let $\beta_i$ denote client $i$'s privacy parameter, $\epsilon_i(k)$ denote the privacy budget. Let the public signal be denoted by $\Theta(k)$. For $k \in \mathbb{N}$, Algorithm \ref{algo_LDP-client} is $\epsilon_i(k) \triangleq \ln \left( \frac{\Theta(k)
		{x}_{i}(k)\left( 2\exp{(\beta_i)} - \beta_i \right) +  \beta_i f_i'(x_i(k))}{\left(f_i'(x_i(k))-\Theta(k)x_i(k)
		\right) \beta_i} \right)$-local differentially private.
\end{lemma}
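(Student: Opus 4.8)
The plan is to read the law of the reported bit $X_i(k+1)$ directly off Algorithm \ref{algo_LDP-client} and then evaluate the local differential privacy ratio of Definition \ref{def:Local-DP} at each admissible output. Writing $\sigma_i \triangleq \sigma_i(\Theta(k),x_i(k))$ for brevity, the output is governed by two independent Bernoulli trials, $b_i(k)$ with parameter $\sigma_i$ and $b_i'(k)$ with parameter $p_i$, and $X_i(k+1)=1$ exactly when $b_i(k)=1$, or $b_i(k)=0$ and $b_i'(k)=1$. I would first record
\begin{align*}
\mathbb{P}(X_i(k+1)=1)=\sigma_i+(1-\sigma_i)p_i,\qquad \mathbb{P}(X_i(k+1)=0)=(1-\sigma_i)(1-p_i),
\end{align*}
and, crucially, split the participation report into its \emph{genuine} route (the event $b_i(k)=1$, probability $\sigma_i$) and its \emph{randomized} or cover route (the event $b_i(k)=0$, $b_i'(k)=1$, probability $(1-\sigma_i)p_i$). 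This decomposition is what the randomized response buys: a report of participation is consistent with both a true intention and a mere coin flip.

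Next I would apply Definition \ref{def:Local-DP} at the two outputs $\eta\in\{0,1\}$. The binding case is $\eta=1$: the privacy loss incurred by a participation report is the log-ratio of the total probability of that report to the probability that it was produced purely by the randomization step, namely $\ln\left((\sigma_i+(1-\sigma_i)p_i)/((1-\sigma_i)p_i)\right)$. For $\eta=0$ I would check that the corresponding ratio does not exceed this value, so that the output $X_i(k+1)=0$ does not tighten the bound; the useful identity here is $(1-\sigma_i)(1-p_i)=1-\sigma_i-(1-\sigma_i)p_i$, which keeps the $\eta=0$ ratio below the $\eta=1$ ratio since the slack equals $\sigma_i\ge 0$. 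This identifies $\exp(\epsilon_i(k))=(\sigma_i+(1-\sigma_i)p_i)/((1-\sigma_i)p_i)$.

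It then remains to substitute the two parameters and simplify. Using $p_i=\tfrac{\beta_i}{2}\exp(-\beta_i)$, so that $1/p_i=2\exp(\beta_i)/\beta_i$, together with $\sigma_i=\Theta(k)x_i(k)/f_i'(x_i(k))$, so that $\Theta(k)x_i(k)=\sigma_i f_i'(x_i(k))$ and $f_i'(x_i(k))-\Theta(k)x_i(k)=(1-\sigma_i)f_i'(x_i(k))$, I would clear denominators by multiplying through by $f_i'(x_i(k))\beta_i$ to reach the stated closed form. Well-definedness comes from the classical algorithm's requirement $0<\sigma_i\le 1$ together with strict convexity and positivity of $f_i'$, which give $f_i'(x_i(k))-\Theta(k)x_i(k)>0$; hence the denominator is positive and the argument of the logarithm exceeds $1$, so $\epsilon_i(k)>0$ is well defined. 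The one caveat I would state is that this requires $\sigma_i<1$ strictly: at $\sigma_i=1$ the report is deterministic and the budget degenerates, consistent with the vanishing denominator in the formula.

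The main obstacle is fixing the correct pair of probabilities in the privacy inequality rather than the algebra. The naive conditional ratio---comparing $\mathbb{P}(X_i(k+1)=\eta)$ under a true intention of $1$ versus $0$---is degenerate, because the genuine branch reports $1$ deterministically, so $\mathbb{P}(X_i(k+1)=0\mid b_i(k)=1)=0$ and the one-sided ratio blows up. The content of the lemma is therefore the marginal, plausible-deniability reading: the privacy budget measures how much more likely a participation report is than its deniable, randomization-only explanation. Identifying this pair and verifying that $\eta=1$ is the binding output is the step that needs care; everything after it is routine simplification.
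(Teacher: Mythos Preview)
Your approach is essentially the same as the paper's: both compute the ratio
\[
\frac{\sigma_i+(1-\sigma_i)p_i}{(1-\sigma_i)p_i}
\]
and then substitute $\sigma_i=\Theta(k)x_i(k)/f_i'(x_i(k))$ and $p_i=\tfrac{\beta_i}{2}\exp(-\beta_i)$ to reach the stated closed form; the paper simply labels the numerator and denominator as $\mathbb{P}(X_i(k+1)=1\mid X_i(k)=1)$ and $\mathbb{P}(X_i(k+1)=1\mid X_i(k)=0)$ without further discussion. Your write-up is in fact more complete than the paper's, since you also verify that the output $\eta=0$ does not tighten the bound, discuss well-definedness of the logarithm via $\sigma_i<1$, and make explicit the plausible-deniability reading that resolves the apparent degeneracy of the conditional ratio; the paper handles only the $\eta=1$ case and omits these points.
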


\begin{proof}
Following the steps similar to \cite[Chapter 3]{Dwork2014}, we obtain
	\begin{align*}
	&\left(\frac{\mathbb{P}(X_{i}(k+1) = 1 \mid X_{i}(k)=1}{\mathbb{P}(X_{i}(k+1) = 1 \mid X_{i}(k)=0} \right) \\ &= \frac{\Theta(k)
		\frac{{x}_{i}(k)}{
			f_i'(x_i(k))} + \left(1-\Theta(k)
		\frac{{x}_{i}(k)}{
			f_i'(x_i(k))}\right) \times p_i}{\left(1-\Theta(k)
		\frac{{x}_{i}(k)}{
			f_i'(x_i(k))}\right)p_i}.
	\end{align*}
	Replacing the value of $p_i$ (see Equation \eqref{eq:pi}), we obtain the following result:
\begin{align*}
	&\left(\frac{\mathbb{P}(X_{i}(k+1) = 1 \mid X_{i}(k)=1}{\mathbb{P}(X_{i}(k+1) = 1 \mid X_{i}(k)=0} \right) \\ &= \frac{\Theta(k)
		\frac{{x}_{i}(k)}{
			f_i'(x_i(k))} + \left(1-\Theta(k)
		\frac{{x}_{i}(k)}{
			f_i'(x_i(k))}\right) \frac{\beta_i}{2} \exp{(- \beta_i)}}{\left(1-\Theta(k)
		\frac{{x}_{i}(k)}{
			f_i'(x_i(k))}\right)(\frac{\beta_i}{2} \exp{(- \beta_i)})} \\
   &= \frac{\Theta(k)
		\frac{{x}_{i}(k)}{
			f_i'(x_i(k))} + \left(\frac{f_i'(x_i(k))-\Theta(k)
		{x}_{i}(k)}{
			f_i'(x_i(k))}\right) \frac{\beta_i}{2\exp{( \beta_i)}} }{\left(\frac{f_i'(x_i(k))-\Theta(k)
		{x}_{i}(k)}{
			f_i'(x_i(k))}\right) \frac{\beta_i}{2\exp{(\beta_i)}} }
	\end{align*}
\begin{align*}
       &= \frac{2\exp{(\beta_i)}\Theta(k)
		{x}_{i}(k) + \left(f_i'(x_i(k))-\Theta(k)
		{x}_{i}(k)\right) \beta_i }{\left(f_i'(x_i(k))-\Theta(k)
		{x}_{i}(k)\right) \beta_i}
  \\
   &= \frac{\Theta(k)
		{x}_{i}(k)\left(2 \exp{(\beta_i)}- \beta_i\right) + \beta_i f_i'(x_i(k)) }{\left(f_i'(x_i(k))-\Theta(k)
		{x}_{i}(k)\right) \beta_i}
  \\&\triangleq \exp{(\epsilon_i(k))}.
\end{align*} 

Thus, we obtain the following privacy error (budget) of client $i$ at time step $k$:
	\begin{align*}
	\epsilon_i(k) = \ln \left( \frac{\Theta(k)
		{x}_{i}(k)\left(2 \exp{(\beta_i)}- \beta_i\right) + \beta_i f_i'(x_i(k)) }{\left(f_i'(x_i(k))-\Theta(k)
		{x}_{i}(k)\right) \beta_i} \right).
	\end{align*}
\end{proof}

%
%\begin{remark}
%	Randomized response based approaches satisfy $\epsilon$-differential privacy when the respondents reply to a query once \cite{Erlingsson2014}. However, by answering the same query repeatedly, the analyst may infer the private information of the respondent. This case is handled by Erlingsson et al. \cite{Erlingsson2014}. 
%\end{remark}

%\begin{remark}
%	As the number of clients in the network increases, (overall) network privacy decreases.
%\end{remark}
%

We state the following result on the sequential combination of privacy budgets of a population of clients. 
\begin{theorem}[Sequential combination~\cite{Dwork2014}] \label{th:sequential}
	For clients $i=1,2,\ldots,N$ and for $\epsilon_i \in \mathbb{R}$, if $M_i$ is $\epsilon_i$-differentially private mechanism then the resulting sequential combination mechanism $M \triangleq (M_1, \ldots, M_N)$ will be $\sum_{i=1}^{N}\epsilon_i$-differentially private.
\end{theorem}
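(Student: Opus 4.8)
The plan is to prove the statement directly from the differential privacy definition by fixing an arbitrary pair of neighboring datasets and an arbitrary output tuple of the combined mechanism, and then bounding the joint likelihood ratio by the product of the individual likelihood ratios. Concretely, I would fix neighboring datasets $D_1 \sim D_2$ and an output vector $\eta = (\eta_1, \ldots, \eta_N) \in \mathbb{R}^N$ of $M = (M_1, \ldots, M_N)$, and aim to show
\begin{align*}
\mathbb{P}\bigl(M(D_1) = \eta\bigr) \leq \exp\Bigl(\sum_{i=1}^N \epsilon_i\Bigr) \cdot \mathbb{P}\bigl(M(D_2) = \eta\bigr),
\end{align*}
which is exactly the $\sum_{i=1}^N \epsilon_i$-differential privacy guarantee claimed for $M$.

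First I would factor the joint probability of the output tuple. Since in the local setting each client $i$ runs its own mechanism $M_i$ and randomizes independently of the others, the joint law factors as a product of the marginals, $\mathbb{P}(M(D) = \eta) = \prod_{i=1}^N \mathbb{P}(M_i(D) = \eta_i)$. Next I would invoke the per-client guarantee from the hypothesis: each $M_i$ is $\epsilon_i$-differentially private in the sense of Definition \ref{def:Local-DP}, so $\mathbb{P}(M_i(D_1) = \eta_i) \leq \exp(\epsilon_i)\, \mathbb{P}(M_i(D_2) = \eta_i)$ for every coordinate $i$ and every output value $\eta_i$. Multiplying these $N$ inequalities coordinatewise and using $\prod_{i=1}^N \exp(\epsilon_i) = \exp(\sum_{i=1}^N \epsilon_i)$ would collapse the product of per-client bounds into the single factor $\exp(\sum_{i=1}^N \epsilon_i)$, which recombines with the factored form of $\mathbb{P}(M(D_2) = \eta)$ to yield the claim.

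The delicate step is justifying the factorization when the mechanisms are combined adaptively rather than independently, that is, when $M_i$ is permitted to observe the earlier outputs $\eta_1, \ldots, \eta_{i-1}$. In that case I would replace the product of marginals by a telescoping product of conditional probabilities via the chain rule, $\mathbb{P}(M(D)=\eta) = \prod_{i=1}^N \mathbb{P}(M_i(D)=\eta_i \mid \eta_1, \ldots, \eta_{i-1})$, and then argue that the $\epsilon_i$-differential privacy guarantee of $M_i$ holds uniformly over every fixed realization of the prior outputs. The main obstacle is precisely this uniformity: one must verify that conditioning on the earlier coordinates does not break the per-step bound, so that each conditional factor still contributes at most $\exp(\epsilon_i)$ and the telescoping recovers $\exp(\sum_{i=1}^N \epsilon_i)$. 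For the present application the clients randomize independently, so the simpler product argument suffices and this subtlety does not arise.
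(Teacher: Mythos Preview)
Your argument is correct and is the standard proof of sequential composition for differential privacy: factor the joint output law, apply the per-mechanism bound coordinatewise, and collect the exponential factors. Your handling of the adaptive case via the chain rule is also the right refinement.

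However, there is nothing to compare against: the paper does not prove this theorem at all. It is stated as a cited result from \cite{Dwork2014} and used as a black box in the subsequent theorem. So your proposal supplies a proof where the paper offers none; the content of your argument matches the classical proof in the cited reference.
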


We present the following result on the average of the differential privacy of the federated network.
\begin{theorem} 
	Let there be $N$ clients in a federated network; each client runs its local differentially private Algorithm \ref{algo_LDP-client}, then the network is $ \frac{1}{N} \sum_{i=1}^N \ln \left( \frac{\Theta(k)
		{x}_{i}(k)\left(2 \exp{(\beta_i)}- \beta_i\right) + \beta_i f_i'(x_i(k)) }{\left(f_i'(x_i(k))-\Theta(k)
		{x}_{i}(k)\right) \beta_i} \right)$-differentially private on average.
\end{theorem}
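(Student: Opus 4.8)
The plan is to combine the per-client local privacy guarantee from Lemma \ref{lem:local_DP} with the sequential composition result of Theorem \ref{th:sequential}, and then interpret the resulting aggregate budget as a per-client average. First I would invoke Lemma \ref{lem:local_DP} to fix, for each client $i = 1, \ldots, N$, the local differential privacy budget at time step $k$,
\[
\epsilon_i(k) = \ln\left(\frac{\Theta(k) x_i(k)\left(2\exp(\beta_i) - \beta_i\right) + \beta_i f_i'(x_i(k))}{\left(f_i'(x_i(k)) - \Theta(k) x_i(k)\right)\beta_i}\right),
\]
so that each client's mechanism $M_i$ (Algorithm \ref{algo_LDP-client}) is $\epsilon_i(k)$-locally differentially private. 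Since clients do not exchange information and each randomizes its own participation state independently (via the independent Bernoulli draws $b_i(k)$ and $b_i'(k)$) before reporting to the central server, the composite mechanism $M = (M_1, \ldots, M_N)$ observed by the server at time step $k$ is exactly the sequential combination of the $N$ independent client mechanisms.

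Second, I would apply Theorem \ref{th:sequential} to this composite mechanism: since each $M_i$ is $\epsilon_i(k)$-differentially private, the combination $M$ is $\sum_{i=1}^N \epsilon_i(k)$-differentially private. Substituting the explicit expression for $\epsilon_i(k)$ gives the total network budget as a sum of logarithms, and dividing by the number of clients $N$ yields the per-client average budget claimed in the statement. The substitution and averaging are routine once the two cited results are in place.

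The step I expect to require the most care is the justification that the sequential composition of Theorem \ref{th:sequential} (stated for the centralized $\epsilon$-DP notion of Definition \ref{def:DP}) legitimately applies to the local mechanisms $M_i$ of Definition \ref{def:Local-DP}. Here one must argue that each client's randomization uses independent coin flips and that the queries $q_i$ act on disjoint, client-local datasets $D_i$, so that the per-client local guarantees compose additively into a guarantee for the joint output $(M_1, \ldots, M_N)$. I would make this independence and disjointness explicit before invoking the composition theorem.

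Finally, I would be careful to state precisely what \emph{``differentially private on average''} means, since this is the only non-standard aspect of the claim: it refers to the arithmetic mean $\frac{1}{N}\sum_{i=1}^N \epsilon_i(k)$ of the individual budgets, not to a standard joint guarantee. Under ordinary composition the joint budget is the sum $\sum_{i=1}^N \epsilon_i(k)$, which grows with $N$; the factor $\frac{1}{N}$ in the statement is correct only under this averaged reading, so flagging the definition is essential to avoid the result being misread as a sharper joint guarantee than composition actually provides.
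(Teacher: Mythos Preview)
Your proposal is correct and follows exactly the same approach as the paper, which simply invokes Lemma \ref{lem:local_DP} and Theorem \ref{th:sequential} and declares the result straightforward. Your added discussion of why sequential composition applies to the local mechanisms and your careful reading of ``differentially private on average'' are in fact more thorough than the paper's own one-line proof.
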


\begin{proof}
	Using Lemma \ref{lem:local_DP} and the result sequential result of Theorem \ref{th:sequential}, it is straightforward to obtain the result.
\end{proof}

\section{Experimental results}
In this section, we describe the experimental setup and the results. We observe that the clients achieve near-optimal value over long-term average participation and minimize the overall cost to the network with a differential privacy guarantee. We consider $N=1200$ clients in a network that collaborate with a central server to perform a task. Let the required number of active clients to perform the task at a time step be $\mathcal{C} = 600$. 
%
%\subsection{Setup}
For clients $i=1,2,\ldots,N$, let $a_i$ and $b_i$ be uniformly distributed random variables in $(0,40)$. We consider the following cost functions for clients: 
\begin{equation} \label{bin_func} f_{i}(x_{i})= \left\{
\begin{array}{ll}
(i) \hspace{0.1in}  &a_i(x_{i})^2, \\
(ii) \hspace{0.1in}  &\frac{1}{2} a_i(x_{i})^4, \\
(iii) \hspace{0.1in} &\frac{1}{3}a_i(x_{i})^4 + b_i(x_{i})^6,
\\
(iv) \hspace{0.1in} &b_i(x_{i})^2.
\end{array}
\right.
\end{equation}
The clients are categorized into four groups; the first group's clients have cost functions listed in \eqref{bin_func} $(i)$. Analogously, the second, the third, and the fourth group's clients have cost functions listed in \eqref{bin_func} $(ii)$, \eqref{bin_func} $(iii)$, and \eqref{bin_func} $(iv)$, respectively.

We now present the experimental results. We observe that the average number of clients' participation $x_i(k)$  with a certain privacy guarantee by our approach converges close to the optimal participation $x_i^*$ by the classical approach of \cite{Griggs2016}. Note that the value of $x_i^*$ is the average number of clients' participation by the classical approach at the last time step of the simulation. When $\beta_i$ decreases, privacy increases, but efficiency decreases, as illustrated in Figure \ref{fig1-single}. Figure \ref{fig1-single} (a) shows the absolute difference of average participation of clients $\mid x_i(k) - x_i^* \mid$ by our approach with privacy parameter $\beta_i= 3.5$ and the classical approach \cite{Griggs2016}. The absolute difference $\mid x_i(k) - x_i^* \mid$ is close to $0.1$ for all $N=1200$ clients. Notice that we use the same value of $\beta_i$ for all the clients in the network, that is $\beta_i = \beta_u = \beta = 3.5$, for $i,u \in \{1,2,\ldots,N\}$. 
However, when the privacy increases (that is, the value of $\beta_i$ decreases), the absolute difference $\mid x_i(k) - x_i^* \mid$ increases, as demonstrated in Figure \ref{fig1-single} (b) and Figure \ref{fig1-single} (c). It signifies that when $\beta_i$ decreases, the average participation value $x_i(k)$ goes farther from the optimal value $x_i^*$; hence, efficiency decreases. Moreover, Figure \ref{fig1-single} (b) shows the evolution of the absolute difference $\mid x_i(k) - x_i^* \mid$ with $\beta=2.5$ and Figure \ref{fig1-single} (c) shows the evolution of  the absolute difference $\mid x_i(k) - x_i^* \mid$ with $\beta=1.5$.    
%
%We observe that as $\beta_i$ decreases, the absolute difference of averages $\mid x_i(k)$ $x_i^* \mid$ increases, that is, the higher privacy is preserved but efficiency is lowered and the absolute difference of averages increases. 
%
The same conclusion can be derived from Figure \ref{fig3-single}: when $\beta_i$ is small, the cost difference is high; the solution will be farther from the optimal points, but more privacy is guaranteed. Analogously,  when $\beta_i$ increases, the cost difference decreases; thus, the solution will come closer to the optimal points, but less privacy will be guaranteed.

Figure \ref{fig2-single} illustrates the evolution of the average privacy error (budget) for different values of $\beta$. We observe that the privacy error $\epsilon_i$ is small when $\beta$ is small; nevertheless, it increases with the increase of $\beta$. Recall that privacy decreases when $\epsilon_i$ increases, but the algorithm's efficiency increases.
%
%
%Figure \ref{fig3-single}(a) shows the evolution of error bars of derivatives of the cost functions of all clients at $\beta=3.5$; they gather close to each other, showing the consensus of derivatives of the cost functions. Figure \ref{fig3-single}(b) shows the evolution of error bars of derivatives of the cost functions of all clients at different values of $\beta$. It is observed that when $\beta$ increases, the derivatives gather closer, and the error bars at $\beta=3.5$ and by the classical approach (QoSPA) are very close. 
%
%Figure \ref{fig4-single}(a) presents the evolution of aggregate cost; we observe that as the value of $\beta$ increases, the aggregate cost approaches closer to the optimal cost. Figure \ref{fig4-single}(b) shows the utilization of the charging points, which concentrates on the number of charging points.  
%
Figure \ref{fig5-single}(a) illustrates that the total number of participating clients $\sum_{i=1}^N X_i(k)$ for different values of $\beta$ concentrate around the capacity constraint $\mathcal{C}$. Moreover, the histogram plotted for the last $400$ time steps in Figure \ref{fig5-single}(b) shows that most occupancy is closer to the capacity constraint $\mathcal{C}=600$, as also in Figure \ref{fig5-single}(c) by the classical algorithm \cite{Griggs2016}. 
In Figure \ref{fig6-single}(a), the aggregate average participation of clients $\sum_{i=1}^N x_i(k)$ converges to the capacity constraint $\mathcal{C}$. Finally, Figure \ref{fig6-single}(b) shows the evolution of the public signal $\Theta(k)$ with different values of the privacy parameters $\beta$.

\section{Conclusion}
We developed a differentially private algorithm for client selection in federated settings. The algorithm provides near-optimal values over the long-term average participation of clients and provides a  certain differential privacy guarantee to clients in the network. 
Applying the algorithm or its variant for client selection in federated learning is an interesting problem. 

\section*{Acknowledgement}
Thanks to Jia Yuan Yu for the early discussions. This work is partially supported by Mitacs (grant number IT24468).

\bibliographystyle{IEEEtran}
\bibliography{DP_Indiv-Bibtex}

%%%%%%%%%%%%%%%%%%%%%%%%%%%%%%%%%%%%%%%%%%%%%%%%%%%%%%%%%%%%%%%%%%%%%%%%

\end{document}